\documentclass{article}
\usepackage[ansinew]{inputenc}
\usepackage[spanish,english]{babel}
\usepackage{amsmath,amsfonts,amssymb,amsthm}
\usepackage{graphicx}
\usepackage[usenames,dvipsnames]{color}
\usepackage{color}
\usepackage{lineno}
\usepackage{enumerate}
\usepackage{graphicx}
\usepackage[colorlinks=true,citecolor=black,linkcolor=black,urlcolor=blue]{hyperref}
\usepackage{epsfig, subfigure}
\usepackage{amscd,latexsym}
\usepackage{float}

\theoremstyle{plain}
\newtheorem{theorem}{Theorem}
\newtheorem{lemma}[theorem]{Lemma}

\newtheorem{prop}[theorem]{Proposition}

\newtheorem{observation}[theorem]{Observation}

\newcommand\blfootnote[1]{%
	\begingroup
	\renewcommand\thefootnote{}\footnote{#1}%
	\addtocounter{footnote}{-1}%
	\endgroup
}

%\linenumbers

%\date{}

\textwidth=14.2cm
\textheight=22.5cm
\voffset=-1.7cm \hoffset=-0.5cm

%----------------------- Title -------------------------------------------

\title{On circles enclosing many points}

%% Do not capitalize words in the title of the article

\author{Merc\`e Claverol\thanks{{Email: merce.claverol@upc.edu}. Research supported by project MTM2015-63791-R (MINECO/FEDER), and by project Gen. Cat. DGR 2017SGR1640. Universitat Polit\`ecnica de Catalunya, Spain}\and
Clemens Huemer\thanks{{Email: clemens.huemer@upc.edu}. Research supported by project MTM2015-63791-R (MINECO/FEDER), and by project Gen. Cat. DGR 2017SGR1336. Universitat Polit\`ecnica de Catalunya, Spain}
\and Alejandra  Mart\'inez-Moraian\thanks{{Email: alejandra.martinezm@uah.es.}Universidad de Alcal\'a, Spain}}

\begin{document}
\maketitle

\blfootnote{\begin{minipage}[l]{0.3\textwidth} \includegraphics[trim=10cm 6cm 10cm 5cm,clip,scale=0.15]{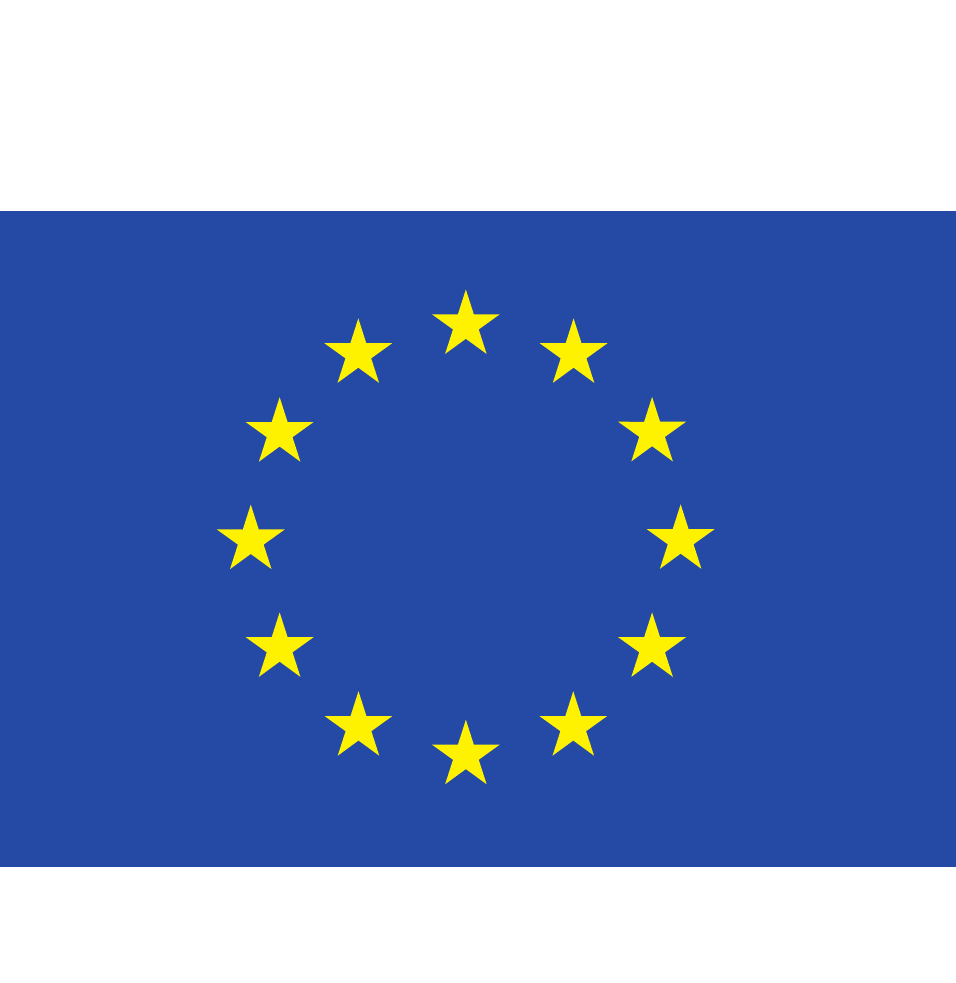} \end{minipage}  \hspace{-2cm} \begin{minipage}[l][1cm]{0.7\textwidth}
		This project has received funding from the European Union's Horizon 2020 research and innovation programme under the Marie Sk\l{}odowska-Curie grant agreement No 734922.
	\end{minipage}}

\begin{abstract}
We prove that every set of $n$ red and $n$ blue points in the plane contains a red and a blue point such that every circle through them encloses at least $n(1-\frac{1}{\sqrt{2}}) -o(n)$ points of the set. This is a two-colored version of a problem posed by Neumann-Lara and Urrutia. We also show that every set $S$ of $n$ points contains two points such that every circle passing through them encloses at most $\lfloor{\frac{2n-3}{3}}\rfloor$ points of $S$. The proofs make use of properties of higher order Voronoi diagrams, in the spirit of the work of Edelsbrunner, Hasan, Seidel and Shen on this topic. Closely related, we also study the number of collinear edges in higher order Voronoi diagrams and present several constructions.
	
%Insert your abstract here. Include keywords, PACS and mathematical
%subject classification numbers as needed.
%\keywords{Discrete geometry \and Point set \and Circle containment \and Voronoi diagram}
% \PACS{PACS code1 \and PACS code2 \and more}
% \subclass{MSC code1 \and MSC code2 \and more}
\end{abstract}

\section{Introduction}
\label{intro}

Neumann-Lara and Urrutia~\cite{NU88} posed the following problem: Prove that every set $S$ of $n$ points in the plane contains two points $p$ and $q$ such that any circle which passes through $p$ and $q$ encloses ``many" points of $S$. The question is to quantify this number of enclosed points which can always be guaranteed. We only consider point sets without three collinear points and without four cocircular points. We say that such a point set is in general position. Almost all the results on this question date back to the late 1980's. The first bound $\lceil \frac{n-2}{60} \rceil$, by Neumann-Lara and Urrutia~\cite{NU88}, was improved  in a series of papers~\cite{BSSU80,H89,HRW89}. The best bound was obtained by Edelsbrunner et al.~\cite{EHSS89} who proved that any set of $n$ points in the plane contains two points such that any circle through those two points encloses at least $n(\frac{1}{2}- \frac{1}{\sqrt{12}})+O(1) \approx \frac{n}{4.7}$ points. Their proof makes use of properties of higher order Voronoi diagrams.  20 years later, Ramos and Via\~na~\cite{RV09} made progress and proved a stronger statement: There is always a pair of points such that any circle through them has, both inside and outside, at least $\frac{n}{4.7}$ points. 
To prove their result, they transformed the problem from circles in the plane to planes in $\mathbb{R}^3$ and used results on the number of $j$-facets of point sets in $\mathbb{R}^3$.\\ 
The known upper bound on the number of enclosed points is $\lceil\frac{n}{4}\rceil-1$, due to a construction by Hayward et al.~\cite{HRW89}. Urrutia~\cite{Urrutia} conjectured that $\frac{n}{4}\pm c$, for some constant $c$, is the tight bound.  For sets of $n$ points in convex position a tight bound of $\lceil\frac{n}{3}\rceil-1$ is known~\cite{HRW89}.\\
Our contributions to this problem are the following ones:\\
$\bullet$ In Section~\ref{sec:adapt} we present a modified and shorter version of the proof of Edelsbrunner et al.~\cite{EHSS89} which also leads to the result of  Ramos and Via\~na~\cite{RV09}. As in~\cite{EHSS89}, the proof makes use of properties of higher order Voronoi diagrams.\\ 
$\bullet$ The proposed modification gives rise to the following result, shown in Section~\ref{sec:ramsey}. 
Every set of $n$ points in the plane contains two points such that any circle through those two points encloses at most  $\lfloor{\frac{2n-3}{3}}\rfloor$ points of $S$.\\
$\bullet$ In Section~\ref{sec:twocolored} we %study a two-colored version of the problem and 
%: Prove that every set $S$ of $n$ red points and $m$ blue points in the plane contains a red point $p$ and a blue point $q$ such that any circle which passes through $p$ and $q$ encloses ``many" points of $S$. 
show that sets $S$ of $n$ red points and $m$ blue points contain a red point $p$ and a blue point $q$ such that any circle passing through them encloses at least $\frac{n + m - \sqrt{n^2 + m^2}}{2} - o(n+m)$ points of $S$. For $n=m$ this gives the bound $n(1-\frac{1}{\sqrt{2}}) -o(n)  \approx 0.2928 n$. 
The colored version of the problem was studied by Prodromou~\cite{P07} for any dimension $d$ and $\lfloor\frac{d+3}{2}\rfloor$ colors. The particular case $d=2$ in~\cite{P07}, Theorem 1.1, gives a lower bound of $\frac{n+m}{36}$ on the number of enclosed points. Our result improves this bound. We also present an upper bound construction with $n$ red and $n$ blue points in convex position.\\ 
%We also present a construction $S$ of $n$ red and $n$ blue points in convex position where for every pair of points, one red and the other blue, there is a circle through them which encloses at most $\lfloor\frac{n}{2}\rfloor$ points of $S$.\\%states that any set $S$ of $n$ red and $m$ blue points in the plane contains a red point $p$ and a blue point $q$ such that any circle through $p$ and $q$ encloses at least $\frac{n+m}{36}$ points of $S$. Our result improves this bound.\\  
$\bullet$ In Section~\ref{sec:repeat} we study how many circles passing through two given points $p$ and $q$ (and a third point of $S$) enclose the same number of points of $S$. This is equivalent to study how many  edges of the order-k Voronoi diagram of $S$ can lie on the same line.
Apart from~\cite{KS94} not much seems to be known on this question. 
%It is known that the maximum number of (bounded) edges on the bisector line of two sites in the order-k Voronoi diagram and in the order-(n-k) Voronoi diagram of $S$ is at most $\min\{k,n-k\}$~\cite{KS94}. 
We present some constructions with many collinear edges in higher order Voronoi diagrams. 
We believe that this line of research can also lead to an improvement for the problem posed by Neumann-Lara and Urrutia.
See~\cite{AIUU96,SSW08} for related works.

\section{An adaption of the proof by Edelsbrunner, Hasan, Seidel and Shen}\label{sec:adapt} 

%We get another proof of the bound of Edelsbrunner et al.~\cite{EHSS89}, in the stronger version of Ramos et al: every circle through the two points leaves at least $n/4.7$ points inside and outside~\cite{RV09}. Check if it is correct. The proof starts as the proof of Edelsbrunner, but then almost no calculations are needed, when using a formula for the number of points in circles and a bound for k-sets. Task: extend the proof (if correct) to further improve the bound. Also, it looks very feasible to adapt it to the colored version.\\

%The proof idea of~\cite{EHSS89} is as follows. Let $S$ be a set of $n$ points in general position in the plane. For each pair of points $p$, $q$ of $S$, let $b_{pq}$ be the perpendicular bisector of the segment $\overline{pq}$. 
%Consider this set $B$ of ${{n}\choose{2}}$ lines. Each such line $b_{pq} \in B$ is subdivided into $n-1$ segments, where each such segment is an edge of some order-k %Voronoi diagram. Edelsbrunner et al. show that the sum of all the edges of the first order-k Voronoi diagrams, for $k=1, 2,\ldots, \approx (1/2-1/\sqrt{12})n$, is less than ${{n}\choose{2}}$. This implies that one of the lines of $B$ does not use any edge of the first order-$k$ Voronoi diagrams among its $n-1$ segments. Consequently, for the two points $p$ and $q$ which define this line $b_{pq}$, any circle through them encloses at least $n(1/2-1/\sqrt{12})$ many points of $S$. We explain this in some more detail.\\

Let $S$ be a set of $n$ points in general position. For each pair of points $p$, $q$ of $S$, let $b_{pq}$ be the perpendicular bisector of the segment $\overline{pq}$. Let $B$ be this set of ${{n}\choose{2}}$ perpendicular bisectors. For each pair of points $p, q$ of $S$, define $C_{pq}$ as the set of circles passing through them. The center of each circle in $C_{pq}$ is on $b_{pq}$. Any line $b_{pq}$ in $B$ is cut into $n-1$ open segments (two of them are unbounded) which are delimited by the center points of the circles passing through $p$, $q$, and one of the $n-2$ remaining points of $S.$ Any circle in $C_{pq}$ with center on one of these segments encloses the same subset of points of $S$; its cardinality is the {\it{weight}} of the segment. It is well known that if the weight of such a segment is $k-1$, then the segment is an edge of the order-k Voronoi diagram, see e.g.~\cite{L82}.\\ 
The proof idea of Edelsbrunner et al.~\cite{EHSS89} is as follows. They show that the sum of all the edges of the first order-k Voronoi diagrams, for $k=1, 2,\ldots, \approx (1/2-1/\sqrt{12})n$, is less than ${{n}\choose{2}}$. This implies that one of the lines in $B$ does not use any edge of the first order-$k$ Voronoi diagrams among its $n-1$ segments. Consequently, for the two points $p$ and $q$ which define $b_{pq}$, any circle through them encloses at least $n(1/2-1/\sqrt{12})$ points of $S$. We present a similar proof. %The next observation, applied in~\cite{EHSS89} is essential for the proof. 
\begin{observation}\label{obs:1}
	When moving the center of a circle in $C_{pq}$ along $b_{pq}$ from one segment to the consecutive one, the number of points contained in the two corresponding circles differs by $\pm 1$. Equivalently: The weights of two consecutive segments on $b_{pq}$ differ by $\pm 1$.
\end{observation}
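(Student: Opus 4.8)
The plan is to reduce the statement to a sign-change count for a family of affine functions of one real variable. Fix a unit-speed linear parametrization $c(t)$, $t\in\mathbb R$, of the line $b_{pq}$; since every point of $b_{pq}$ is equidistant from $p$ and $q$, the circle of $C_{pq}$ with center $c(t)$ has radius $\|c(t)-p\|=\|c(t)-q\|$. For every $s\in S\setminus\{p,q\}$ set
\[
f_s(t)=\|c(t)-s\|^2-\|c(t)-p\|^2 ,
\]
so that $s$ lies strictly inside, on, or strictly outside that circle according to whether $f_s(t)<0$, $f_s(t)=0$, or $f_s(t)>0$. Hence the weight of a segment of $b_{pq}$ equals $\#\{\,s : f_s(t)<0\,\}$ for any $t$ in the interior of that segment.

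Next I would observe that each $f_s$ is a \emph{nonconstant} affine function of $t$. Expanding with respect to a fixed origin and writing $c(t)=c_0+t\,u$ with $u$ a unit direction vector of $b_{pq}$, one gets $f_s(t)=2\,c(t)\cdot(p-s)+\|s\|^2-\|p\|^2=2\,(u\cdot(p-s))\,t+\text{const}$, so the leading coefficient is $2\,u\cdot(p-s)$. This coefficient vanishes only if $p-s$ is orthogonal to $u$; since $u$ is parallel to $b_{pq}$ and therefore orthogonal to $pq$, this would force $p-s$ parallel to $p-q$, i.e.\ $p,q,s$ collinear, contradicting general position. Thus each $f_s$ has a unique zero $t_s$, which is exactly the parameter of the center of the unique circle through $p$, $q$, $s$; by the no-four-cocircular hypothesis these $n-2$ values $t_s$ are pairwise distinct, and they are precisely the breakpoints that cut $b_{pq}$ into its $n-1$ segments.

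To finish, consider two consecutive segments meeting at a breakpoint $t_r$. Among the functions $f_s$, only $f_r$ vanishes at $t_r$, and every other $f_s$ keeps a constant sign in a neighborhood of $t_r$. Therefore, as $t$ crosses $t_r$, the count $\#\{\,s : f_s(t)<0\,\}$ changes by exactly $+1$ (if $r$ enters the moving circle) or $-1$ (if $r$ leaves it), which is the claim. The only point requiring care is the verification that every $f_s$ is genuinely nonconstant, so that each breakpoint is a true sign change of exactly one function — and that is exactly where the no-three-collinear assumption enters; the remainder is a one-line computation.
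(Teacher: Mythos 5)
Your proof is correct. The paper states this as an Observation without any proof; your argument --- reducing the claim to sign changes of the affine functions $f_s(t)=\|c(t)-s\|^2-\|c(t)-p\|^2$ along the bisector, and using the no-three-collinear and no-four-cocircular assumptions to guarantee that each breakpoint is a simple sign change of exactly one such function --- is precisely the standard justification the authors are taking for granted, carried out completely and without gaps.
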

%Also note that if an unbounded (the first or the last) segment of a line $\ell_{p,q}$ corresponds to a circle which encloses $k$ points, then the segment $\overline{pq}$is a {\it{$k$-edge}}. In that case, we say that $\ell_{p,q}$ corresponds to a $k$-edge, the supporting line of $p$ and $q$, and the first segment is a $k$-segment.
For $0 \leq j \leq \frac{n-2}{2}$, a segment $\overline{pq}$  connecting two points $p$, $q$ of $S$ is a {\it{$j$-edge}} of $S$, if the line through $p$ and $q$ divides the plane into two open half-planes, such that one of them contains $j$ points of $S$.
\begin{observation}\label{obs:2}
	Let $\overline{pq}$ be a $j$-edge, and let $b_{pq}$ be the perpendicular bisector of  $\overline{pq}$. Then the two unbounded segments of $b_{pq}$ have weights $j$ and $n-j-2$. 
\end{observation}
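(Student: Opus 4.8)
The plan is to make precise the informal idea that a straight line is a limiting circle of infinite radius. First I would fix coordinates so that the midpoint of $\overline{pq}$ is the origin and $\overline{pq}$ lies on the $x$-axis; then $p=(-a,0)$ and $q=(a,0)$ for some $a>0$, the line through $p$ and $q$ is the $x$-axis, and $b_{pq}$ is the $y$-axis. Every circle of $C_{pq}$ has center $(0,t)$ for some $t\in\mathbb{R}$ and radius $\sqrt{a^2+t^2}$, and a straightforward expansion shows that a point $r=(x_0,y_0)$ of $S\setminus\{p,q\}$ lies strictly inside this circle if and only if $x_0^2+y_0^2-a^2<2ty_0$.

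Next I would exploit general position: since $\overline{pq}$ is a $j$-edge and no three points of $S$ are collinear, no point of $S\setminus\{p,q\}$ lies on the $x$-axis, so $y_0\neq 0$ for all such $r$. For fixed $r$ the inequality $x_0^2+y_0^2-a^2<2ty_0$ is monotone in $t$, so it holds for all sufficiently large $t$ precisely when $y_0>0$, and for all sufficiently negative $t$ precisely when $y_0<0$. As $S$ is finite, there is a threshold $T>0$ such that for every $t>T$ the circle centered at $(0,t)$ encloses exactly the points of $S$ in the open half-plane $\{y>0\}$, and for every $t<-T$ it encloses exactly the points of $S$ in the open half-plane $\{y<0\}$. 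By the definition of a $j$-edge, these half-planes contain $j$ and $n-j-2$ points of $S$.

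To finish, I would observe that the two unbounded segments of $b_{pq}$ are exactly the two components of $b_{pq}$ that avoid all $n-2$ centers of circles through $p$, $q$, and a third point of $S$; hence each of them is contained in $\{t>T\}$ or in $\{t<-T\}$ once $T$ is taken large enough. Since the weight of a segment is the number of points enclosed by any circle whose center lies on it, the two unbounded segments have weights $j$ and $n-j-2$, as claimed.

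I do not anticipate a real obstacle: the only delicate point is upgrading the limiting statement as $t\to\pm\infty$ to one valid on the whole unbounded segment, and this is handled by the finiteness of $S$ together with the monotonicity in $t$ of each point's in/out status, which guarantees that each point of $S\setminus\{p,q\}$ switches sides at most once along $b_{pq}$.
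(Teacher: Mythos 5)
Your proof is correct and is exactly the argument the paper leaves implicit in stating this as an Observation: as the center recedes along $b_{pq}$, the circle through $p$ and $q$ converges to one of the two open half-planes bounded by the line $pq$, and the $j$-edge condition counts the points there as $j$ and $n-j-2$. (Only a wording slip: each unbounded segment \emph{contains} the ray $\{t>T\}$ or $\{t<-T\}$ rather than being contained in it, but since the weight is constant on a segment this does not affect the argument.)
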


%This follows easily, just observe that when moving the center of a circle of $C_{pq}$ along $b_{pq}$ towards infinity, the circle approaches a half-plane defined by the supporting line of $\overline{pq}$, which separates $S$ into two subsets of $j$ and $n-j-2$ points.

%Observations~\ref{obs:1} and~\ref{obs:2} imply:
\begin{observation}\label{obs:3}
	Let $\overline{pq}$ be a $j$-edge of $S$, and let $b_{pq}$ be the perpendicular bisector of  $\overline{pq}$. For every $j\leq k \leq n-j-2$, the line $b_{pq}$ contains at least one segment of weight $k$.
\end{observation}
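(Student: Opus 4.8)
The plan is to deduce the statement from Observations~\ref{obs:1} and~\ref{obs:2} by a discrete intermediate value argument. First I would order the $n-1$ segments of $b_{pq}$ as they appear along the line, say $s_0, s_1, \dots, s_{n-2}$, with $s_0$ and $s_{n-2}$ the two unbounded ones, and let $w_i$ denote the weight of $s_i$. Since $\overline{pq}$ is a $j$-edge we have $0 \le j \le \frac{n-2}{2}$, hence $j \le n-j-2$, and by Observation~\ref{obs:2} the unordered pair $\{w_0, w_{n-2}\}$ equals $\{j,\, n-j-2\}$.

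Next I would invoke Observation~\ref{obs:1}, which gives $|w_{i+1}-w_i| = 1$ for every $0 \le i \le n-3$. A finite sequence of integers whose consecutive terms differ by exactly $1$ attains every integer value lying between its first and its last term; this is the discrete intermediate value theorem, and it follows by a short induction on $|w_{n-2}-w_0|$ (if the two endpoints are equal there is nothing to prove; otherwise the first step moves one unit toward the far endpoint, and one applies the inductive hypothesis to the shorter suffix $w_1, \dots, w_{n-2}$). Applying this to the sequence $w_0, \dots, w_{n-2}$, whose endpoints are $j$ and $n-j-2$, shows that for every integer $k$ with $j \le k \le n-j-2$ there is an index $i$ with $w_i = k$; that is, $b_{pq}$ contains a segment of weight $k$, as claimed.

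There is essentially no obstacle here: all the content sits in the two earlier observations, and the only point deserving a word of care is that $b_{pq}$ is genuinely cut into $n-1$ well-defined segments with the two unbounded ones at its ends. This is guaranteed by the general position assumption: since no four points of $S$ are cocircular, the $n-2$ centers of circles through $p$, $q$ and a third point of $S$ are $n-2$ distinct points of $b_{pq}$, splitting it into exactly $n-1$ open segments.
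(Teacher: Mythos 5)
Your proof is correct and is exactly the argument the paper intends: the observation is stated without proof, and the implicit justification is precisely the discrete intermediate value argument combining Observation~\ref{obs:1} (consecutive weights differ by $\pm 1$) with Observation~\ref{obs:2} (the unbounded segments have weights $j$ and $n-j-2$). Your additional remark on why the $n-2$ circle centers are distinct, via the no-four-cocircular assumption, is a sound piece of care but not a departure from the paper.
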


Let $c_k$ be the number of circles passing through three points of $S$, that enclose exactly $k$ points of $S$.
Differing from the proof of~\cite{EHSS89}, we will use the following property, obtained by Lee~\cite{L82}; later proofs were given in~\cite{A04, CS89, L03}:
\begin{equation}\label{equ:circles}
c_k+c_{n-k-3}=2(k+1)(n-k-2)
\end{equation}
A direct correspondence between the numbers $c_k$ and the number of faces of higher order Voronoi diagrams is given for instance in~\cite{L03}.
\begin{observation}\label{obs:circles}
	A circle passing through points $a,b,c \in S$ corresponds to three segments, one on $b_{ab}$, one on $b_{ac}$ and one on $b_{bc}.$
	Hence, when summing over all ${{n}\choose{2}}$ lines in $B$, the number of segments of weight $k$ plus the number of segments of weight $n-k-3$ equals $3c_k+3c_{n-k-3}=6(k+1)(n-k-2)$.
\end{observation}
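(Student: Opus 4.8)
The statement has two ingredients: a geometric correspondence — a circle through three points of $S$ determines one segment on each of three bisectors — and the counting identity that follows from it. I would establish them in that order.

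For the correspondence, let $a,b,c\in S$ and let $z$ be the center of the unique circle through them. Since $z$ is equidistant from $a$, $b$ and $c$, it lies on all three bisectors $b_{ab}$, $b_{ac}$, $b_{bc}$; on $b_{ab}$ it is moreover one of the $n-2$ center points that cut the line, namely the center of the circle through the pair $a,b$ and the third point $c$, so it is a common vertex of the three bisectors. By Observation~\ref{obs:1}, crossing $z$ along $b_{ab}$ changes the circle through $a,b$ only in whether it contains $c$; hence if the circle through $a,b,c$ encloses exactly $m$ points of $S$, the two segments of $b_{ab}$ meeting at $z$ have weights $m$ and $m+1$, and the same holds on $b_{ac}$ and $b_{bc}$. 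Assigning to the circle, on each of its three bisectors, the weight-$m$ segment incident to $z$ (the one whose circles leave the third point outside) gives the asserted map.

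To turn this into the count I would count incidences between circles through three points of $S$ and segments, declaring a circle incident to a segment when the circumcenter is an endpoint of the segment. From the circle side, the previous step shows that a circle enclosing $m$ points is incident to exactly three weight-$m$ segments and three weight-$(m+1)$ segments, so the number of incidences involving weight-$k$ segments equals $3c_k+3c_{k-1}$. From the segment side, each bounded weight-$k$ segment contributes two incidences — its two endpoints are distinct circumcenters, by the no-four-cocircular hypothesis — and each unbounded one contributes a single incidence, the unbounded segments being controlled by Observation~\ref{obs:2} via the $j$-edges of $S$. Equating the two counts expresses the number of weight-$k$ segments over all of $B$ in terms of $c_k$, $c_{k-1}$, and the numbers of $j$-edges. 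Adding the expressions for weight $k$ and weight $n-k-3$, the conjugation $k\leftrightarrow n-k-3$ is precisely what makes the shifted circle counts and the $j$-edge contributions pair up and cancel, leaving $3c_k+3c_{n-k-3}$; invoking equation~(\ref{equ:circles}) then gives the closed form $6(k+1)(n-k-2)$.

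The main obstacle is exactly this last bookkeeping. The map ``circle $\mapsto$ three segments'' is \emph{not} a bijection: a bounded weight-$k$ segment can lie between two circumcenters of triples that \emph{both} enclose $k$ points, so one cannot simply read off ``$3c_k$'' weight-$k$ segments, and the unbounded segments have to be tracked separately through Observation~\ref{obs:2}. Arranging the correction terms so that the symmetric sum collapses to an exact identity — which is why the statement pairs weight $k$ with weight $n-k-3$ rather than treating a single weight — is the delicate point; everything else is the elementary geometry of circumcenters together with Observations~\ref{obs:1} and~\ref{obs:2} and the identity~(\ref{equ:circles}).
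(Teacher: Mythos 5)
Your first paragraph reconstructs exactly the correspondence the paper has in mind: the circumcenter of $a,b,c$ lies on all three bisectors, and on each of them it separates a segment of weight $m$ from one of weight $m+1$, where $m$ is the number of points enclosed by the circle through $a,b,c$. That is the entire content of the paper's one-line justification: each circle enclosing $m$ points accounts for three segments of weight $m$, one per bisector, so $3c_k+3c_{n-k-3}$ counts, with multiplicity, the segments of weight $k$ or $n-k-3$ that have an endpoint whose circle encloses the matching number of points; Property~(\ref{equ:circles}) then gives the closed form. This incidence count is also all that the later arguments use: each line of type (1) or (2) is shown to carry two segments of weight $k$ or $n-k-3$, each with an endpoint realizing that weight, hence each such line consumes at least two of the $6(k+1)(n-k-2)$ incidences.

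The gap is in your second and third paragraphs, where you try to upgrade this to an exact count of \emph{distinct} segments and assert that ``the conjugation $k\leftrightarrow n-k-3$ makes the shifted circle counts and the $j$-edge contributions pair up and cancel.'' They do not. Your own double count gives $2s_k-u_k=3c_k+3c_{k-1}$, where $s_k$ and $u_k$ denote the numbers of weight-$k$ segments and of unbounded weight-$k$ segments; hence $2(s_k+s_{n-k-3})=3(c_k+c_{n-k-3})+3(c_{k-1}+c_{n-k-4})+u_k+u_{n-k-3}$. Property~(\ref{equ:circles}) pairs $c_{k-1}$ with $c_{n-k-2}$, not with $c_{n-k-4}$ --- the two indices are off by two in opposite directions --- so the extra terms survive and the claimed cancellation is unavailable. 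In fact the exact equality of distinct-segment counts is false: for $n=4$ points in convex position one has $c_0=c_1=2$, while $s_0=s_2=5$ (the nearest- and furthest-point Voronoi diagrams each have five edges) and $s_1=8$, so $s_0+s_{n-0-3}=s_0+s_1=13$ whereas $3c_0+3c_1=12$. The correct reading of the Observation is the multiplicity count your first paragraph already establishes (a segment is counted once per endpoint whose circle encloses exactly the segment's weight); trying to prove it as an equality of cardinalities of sets of segments is a step that cannot be completed.
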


We state now the adapted proof idea: Using Observation~\ref{obs:circles} we show that at least one of the lines in $B$, say $b_{pq}$, neither contains a segment of weight $k-1$ nor of weight $n-k-2$, for some value $k\leq \frac{n-4}{2}$ to be determined.  Then, each of  the $n-1$ segments of $b_{pq}$ has weight between $k$ and $n-k-3$. Therefore, any circle passing through points $p$ and $q$, which define $b_{pq}$, encloses between $k$ and $n-k-3$ points of $S$. It turns out that $k=\left(\frac{1}{2}-\frac{1}{\sqrt{12}}\right)n$ is the best choice for $k$ in the proof.\\

{\bf{Claim}:} There exist two points $p, q \in S$ such that the  perpendicular bisector $b_{pq}$ of  $\overline{pq}$ neither contains a segment of weight $k-1$ nor of weight  $n-k-2$.\\

Assume by contradiction, that every line in $B$  contains a segment of weight $k-1$ or of weight $n-k-2$. Partition the lines in $B$ into three classes:  (1) those whose defining points $p,q \in S$ form a $j$-edge for $j \leq k-1$, (2) those whose defining points $p,q \in S$ form a $j$-edge for $j\geq k+1$, and (3) the lines corresponding to $k$-edges.

%For simplicity, we can ignore those lines for which we have $j=k$ or $j=k-1$ or $j=k+1$, because there are less than $O(n \sqrt[3]{k+1})$ of them (the number of $k$-edges is less than $O(n \sqrt[3]{k+1})$~\cite{D98}, and we will see in the following that this does not affect the asymptotic counting.\\

%Each line of type (1) contains at least one segment of weight $k-1$ and at least one of weight $n-k-1$, because along such a line we go from $j-2$ to $n-j-2$ in steps %of $1$.

By Observation~\ref{obs:3} each line of type (1) contains at least  one segment of weight $k$ and at least one of weight $n-k-3$.\\
For any line $b_{pq}$ of type (2), its unbounded segments have weights $j\geq k+1$ and $n-j-2 \leq n-k-3$, respectively. 
By assumption, $b_{pq}$ also contains a segment of weight $k-1$ or of weight $n-k-2$. If $b_{pq}$ contains a segment of weight $k-1$, then $b_{pq}$ contains two segments of weight $k$; indeed, when traversing $b_{pq}$, we go from a segment of weight $j\geq k+1$ via one of weight $k-1$ to one of weight $n-j-2 \geq k+1$; since the changes of the weights of consecutive segments are $\pm1$, we encounter a subsequence of weights $k+1,k,k-1,k,k+1$ among the segments of $b_{pq}$.
In the same way, if $b_{pq}$ contains a segment of weight $n-k-2$, we encounter a subsequence of weights $n-k-4, n-k-3, n-k-2, n-k-3$ when traversing $b_{pq}$. Hence, in this case $b_{pq}$ contains two segments of weight $n-k-3$.
We conclude that each line of type (1) and of type (2) in $B$ contains at least two segments with weight in the set $\{k, n-k-3\}$. 
%Since when transversing $b_{pq}$, the changes of the weights of segments are in steps of $1$, $b_{pq}$ either contains at least two segments of weight $k$ (in the %case $b_{pq}$ contains a segment of weight $k-1$, we go "down and up again", i.e. we see a subsequence of weights $k+1,k,k-1,k,k+1$) or $b_{pq}$ contains at least two% %segments of weight $n-k-3$ (in the case $b_{pq}$ contains a segment of weight $n-k-2$, we go "up and down again", i.e. we see a subsequence of weights $n-k-4, n-k-3, %n-k-2, n-k-3, n-k-4$ ).\\

The number of lines of type (3) is at most $O(n \sqrt[3]{k+1})$, the known upper bound on the number of $k$-edges~\cite{D98}. %This also holds for $k-1$-edges and $k+1$-edges.

%We obtain that each of the ${{n}\choose{2}} - O(n \sqrt{n})$ lines of type (1) and of type (2) of $B$ contains at least two segments with weight from the set $\{k, n-k-3\}$. 
%, excluding the at most $O(n\sqrt{k})$ lines corresponding to $k$-edges, $k-1$-edges, and $k+1$-edges, contains at least two segments with weight from the set $\{k, n-k-3\}$.
By Observation~\ref{obs:circles}, the number of segments of weight $k$ or of weight $n-k-3$ among all lines in $B$ is  $$6(k+1)(n-k-2).$$
We get a contradiction if $$6(k+1)(n-k-2) < 2\left({{n}\choose{2}}- O(n \sqrt[3]{k+1})\right),$$ because then there would not be enough segments to cover all the lines of type (1) and of type (2) in $B$ with two segments. %Note that we need two segments per line.
The largest value of $k$ which gives a contradiction is $k=\left(\frac{1}{2}-\frac{1}{\sqrt{12}}\right)n -o(n)$.
This proves the claim.\\

Therefore, there exist two points $p,q \in S$ such that $b_{pq}$ contains no  segment of weight $k-1$ and no segment of weight $n-k-2$. Note that this line $b_{pq}$ is of type (2). Then,  $b_{pq}$ cannot contain a segment of weight $i$ for $i<k-1$ and for $i>n-k-2$ either.  We thus have obtained another proof of the theorem by Ramos and Via\~na~\cite{RV09}, when neglecting sublinear terms:
\begin{theorem}\label{thm:general}
	Every set $S$ of $n$ points in general position in the plane contains two points such that each circle passing through them encloses at least $k$ and at most $n-k-3$ points of $S$, for $k=\left(\frac{1}{2}-\frac{1}{\sqrt{12}}\right)n -o(n)$.
\end{theorem}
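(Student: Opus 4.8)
Since the Claim stated above has already been established, proving Theorem~\ref{thm:general} amounts to translating the conclusion about perpendicular bisectors back into a statement about circles, and that is the plan. First I would apply the Claim with $k=\left(\frac12-\frac{1}{\sqrt{12}}\right)n-o(n)$ to obtain two points $p,q\in S$ whose perpendicular bisector $b_{pq}$ contains no segment of weight $k-1$ and no segment of weight $n-k-2$. I would then check that $b_{pq}$ is necessarily of type (2). It cannot be of type (1): if $p,q$ form a $j$-edge with $j\le k-1$, then $j\le k-1\le n-j-2$ (the right inequality uses $j\le\frac{n-2}{2}$ together with $k\le\frac{n-4}{2}$), so Observation~\ref{obs:3} would put a segment of weight $k-1$ on $b_{pq}$. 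It cannot be of type (3) either: if $p,q$ form a $k$-edge, Observation~\ref{obs:2} gives $b_{pq}$ an unbounded segment of weight $n-k-2$, which is excluded. Hence $p,q$ form a $j$-edge with $j\ge k+1$, and by Observation~\ref{obs:2} the two unbounded segments of $b_{pq}$ have weights $j$ and $n-j-2$, both inside the interval $[k+1,\,n-k-3]$ since $k+1\le j\le\frac{n-2}{2}\le n-k-3$.

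Next I would run a discrete intermediate-value argument along $b_{pq}$. By Observation~\ref{obs:1} the weights of consecutive segments differ by exactly $\pm1$; reading the weight sequence of the $n-1$ segments from one unbounded end to the other, it begins and ends in $[k+1,\,n-k-3]$ and it takes neither the value $k-1$ nor the value $n-k-2$. Because it moves in unit steps, never taking the value $k-1$ forces it to stay $\ge k$, and never taking the value $n-k-2$ forces it to stay $\le n-k-3$. So every one of the $n-1$ segments of $b_{pq}$ has weight in $[k,\,n-k-3]$. To finish, note that every circle through $p$ and $q$ has its center on $b_{pq}$, and the number of points of $S$ it encloses equals the weight of the segment containing that center (for the finitely many circles also passing through a third point of $S$, the count lies between the weights of the two segments meeting at the center, which are again in range). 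Hence every circle through $p$ and $q$ encloses between $k$ and $n-k-3$ points of $S$, which is the theorem.

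The only genuinely substantial point is the one already settled in the Claim: the optimal choice of $k$. There one needs the largest $k$ with $6(k+1)(n-k-2)<2\binom{n}{2}-O(n\sqrt[3]{k+1})$, so that the $6(k+1)(n-k-2)$ segments of weight $k$ or $n-k-3$ counted by Observation~\ref{obs:circles} are too few to assign two to each of the $\binom{n}{2}-O(n\sqrt[3]{k+1})$ bisectors of types (1) and (2). Solving this quadratic inequality, with the subcubic $k$-edge term absorbed into the error term, gives $k=\left(\frac12-\frac{1}{\sqrt{12}}\right)n-o(n)$; all the remaining steps above are bookkeeping with Observations~\ref{obs:1}--\ref{obs:circles}.
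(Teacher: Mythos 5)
Your proposal is correct and follows essentially the same route as the paper: it invokes the Claim, observes that the resulting bisector $b_{pq}$ must be of type (2), and then uses the unit-step weight changes (Observation~\ref{obs:1}) to conclude that every segment of $b_{pq}$ has weight in $[k,\,n-k-3]$. You merely make explicit some details the paper leaves implicit (ruling out types (1) and (3), and the intermediate-value argument), all of which check out.
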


\section{Circles enclosing not too many points}\label{sec:ramsey}

The following lemma is implied by known results on higher order Voronoi diagrams and bounds on $k$-sets. A $k$-set of a point set $S$ is a subset of $k$ points of $S$ which can be separated from the remaining points of $S$ by a straight line. 

\begin{lemma}\label{lem:unboundedregions}
Let $S$ be set of $n$ points and let $k< \frac{n-3}{2}$. Then $c_{k} \geq (k+1)(n-k-2)$ and $c_{n-k-3} \leq (k+1)(n-k-2)$. 
\end{lemma}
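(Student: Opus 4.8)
The plan is to exploit the symmetric identity~\eqref{equ:circles}, namely $c_k + c_{n-k-3} = 2(k+1)(n-k-2)$, and to pin down which of the two summands is the larger one. Since the two terms add up to exactly $2(k+1)(n-k-2)$, it suffices to prove the single inequality $c_k \ge (k+1)(n-k-2)$; the companion bound $c_{n-k-3} \le (k+1)(n-k-2)$ then follows immediately by subtraction. So the whole lemma reduces to showing that, when $k < \frac{n-3}{2}$, at least half of the ``weighted total'' $2(k+1)(n-k-2)$ is carried by $c_k$ rather than by $c_{n-k-3}$.

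To see this I would count a distinguished family of circles that enclose exactly $k$ points and show there are at least $(k+1)(n-k-2)$ of them. The natural source is the \emph{unbounded} edges of the order-$(k{+}1)$ Voronoi diagram, i.e.\ the segments of weight $k$ that are the unbounded rays of some bisector $b_{pq}$. By Observation~\ref{obs:2}, an unbounded ray of $b_{pq}$ has weight $k$ exactly when $\overline{pq}$ is a $k$-edge (equivalently, when the $k$-point side of the line through $p,q$ is the side towards which that ray recedes), and the number of such rays over all bisectors equals twice the number of $k$-edges of $S$, which in turn relates to the number of $(k{+}1)$-sets. More directly: it is classical (Lee~\cite{L82}) that the order-$(k{+}1)$ Voronoi diagram has exactly $2(k+1)(n-k-2)$ cells in a suitable sense, or — cleaner for our purposes — that the number of unbounded cells of the order-$(k{+}1)$ diagram is the number of $(k{+}1)$-sets, and each bounded-vs-unbounded accounting of edges, together with Euler's formula, forces the count of weight-$k$ segments lying on the \emph{convex-position-like} part of the arrangement. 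The key point I want to extract is an asymmetry: the identity~\eqref{equ:circles} is symmetric in $k \leftrightarrow n-k-3$, but the geometry is not, because for $k < \frac{n-3}{2}$ the value $k$ is the ``small'' index, and small-index Voronoi diagrams have more unbounded structure, hence more circles-through-three-points realizing the small weight.

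Concretely, the cleanest route is: every $j$-edge with $j \le k$ contributes, via Observation~\ref{obs:3}, a segment of weight $k$ on its bisector, and that segment is an edge of the order-$(k{+}1)$ Voronoi diagram incident to a circle through three points enclosing exactly $k$ points; summing the known lower bound on the total number of $j$-edges for $j = 0, 1, \dots, k$ (which is $(k+1)(n-k-1)$ in convex position and at least a comparable quantity in general, by the circular sequence / $k$-set lower bound theory of Lovász and Erdős–Lovász–Simmons–Straus) yields $c_k \ge (k+1)(n-k-2)$ after discarding the over-counting by the factor $3$ inherent in Observation~\ref{obs:circles}. The main obstacle I anticipate is getting the constant exactly right: the rough counting sketched above produces a bound of the correct order $(k+1)(n-k-2)$ but the precise coefficient $1$ (as opposed to, say, $1/2$ or $2/3$) requires invoking the exact face-count formulas for order-$k$ Voronoi diagrams rather than mere asymptotics, so the honest proof should cite the exact enumeration of unbounded faces of the order-$(k{+}1)$ diagram (equivalently the exact $(\le k)$-set / $k$-edge counts) and then read off $c_k \ge (k+1)(n-k-2)$ together with~\eqref{equ:circles} to close both inequalities simultaneously.
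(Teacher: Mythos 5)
Your opening reduction is exactly right and coincides with the paper's final step: by the symmetric identity~(\ref{equ:circles}) it suffices to prove $c_k \ge (k+1)(n-k-2)$, and the bound on $c_{n-k-3}$ follows by subtraction. The problem is the route you propose to that single inequality. You want to produce the required circles of weight $k$ by summing a lower bound on the number of $j$-edges over $j\le k$, and you assert that this sum is ``$(k+1)(n-k-1)$ in convex position and at least a comparable quantity in general.'' That assertion is false in the direction you need: $(k+1)n$ is the \emph{maximum} of the number of $(\le k)$-edges (attained in convex position), whereas the true minimum, coming from precisely the Lov\'asz / Erd\H{o}s--Lov\'asz--Simmons--Straus theory you invoke, is only $3\binom{k+2}{2}=\Theta(k^2)$. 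Feeding that into your count (and dividing by the overcounting factor $3$ from Observation~\ref{obs:circles}) yields only $c_k=\Omega(k^2)$, nowhere near $(k+1)(n-k-2)\approx kn$ when $k$ is small. Likewise there is no ``exact enumeration of unbounded faces of the order-$(k+1)$ diagram'' to cite: that number equals the number of $(k+1)$-sets, which is configuration-dependent.

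The asymmetry between $k$ and $n-k-3$ that you correctly sense is real, but it is exploited in the opposite direction. The paper uses the exact identity from~\cite{L03}, $\sum_{i=1}^{k} f_{i-1}^\infty = (k-1)(2n-k) - c_{k-2}$, where $f_i^\infty$ is the number of unbounded regions of the order-$i$ Voronoi diagram. Since each such region corresponds to an $i$-set, the left-hand side counts $(\le k-1)$-sets, which Alon--Gy\H{o}ri~\cite{AG86} bound from \emph{above} by $(k-1)n$. An upper bound on the unbounded-region count therefore converts, through the identity, into the lower bound $c_{k-2} \ge (k-1)(2n-k)-(k-1)n=(k-1)(n-k)$, i.e.\ $c_k\ge(k+1)(n-k-2)$ after reindexing. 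So the missing ingredients are the exact identity tying $c_{k-2}$ to the cumulative unbounded-face count and the $(\le k)$-set \emph{upper} bound; a lower bound on $(\le k)$-edge numbers of the strength your plan requires simply does not exist.
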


\begin{proof}
 Denote with $f_k^\infty$ the number of unbounded regions in the order-$k$ Voronoi diagram of $S$. Also define 
$f_0^\infty =0$ and $c_{-1}=0.$ These numbers $f_k^\infty$ are related to circles enclosing points via the following relation, see~\cite{L03}.
$$\sum_{i=1}^{k} f_{i-1}^\infty = (k-1)(2n-k) - c_{k-2}.$$
On the other hand, it is well known that each unbounded region $f_k^\infty$ corresponds to a $k$-set. The number of $\leq k$-sets of $S$ is known to be at most $k \cdot n$ for $k<\frac{n}{2}$~\cite{AG86}. Therefore, $\sum_{i=1}^{k} f_{i-1}^\infty \leq (k-1) \cdot n$ and 
$$c_{k-2} \geq (k-1)(2n-k) - (k-1) \cdot n.$$
Then,  $c_{k} \geq (k+1)(2n-k-2) - (k+1)\cdot n = (k+1)(n-k-2)$ and finally $c_{n-k-3} \leq (k+1)(n-k-2)$ follows from Property~(\ref{equ:circles}).
\end{proof}

A slight variation of the proof of Theorem~\ref{thm:general}, and using~\cite{AG86} instead of~\cite{D98}, leads to the following result.

\begin{theorem}\label{thm:ramsey}
	Let $S$ be a set of $n$ points in general position in the plane. Then $S$ contains two points such that 
	every circle passing through them encloses at most $\lfloor{\frac{2n-3}{3}}\rfloor$ points of $S$.
\end{theorem}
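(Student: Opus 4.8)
The plan is to mimic the structure of the proof of Theorem~\ref{thm:general}, but replace the upper bound on the number of $k$-edges from~\cite{D98} by the much stronger (linear) bound on $\leq k$-sets from~\cite{AG86}, which is exactly what Lemma~\ref{lem:unboundedregions} is designed to exploit. Fix a parameter $k$ (to be optimized) and try to show that some line $b_{pq}\in B$ contains no segment of weight $k-1$ and no segment of weight $n-k-2$. By Observation~\ref{obs:3}, together with the subsequence argument already used in Section~\ref{sec:adapt}, every line of type (1) (its defining pair is a $j$-edge with $j\le k-1$) and every line of type (2) ($j\ge k+1$) would then be forced to contain at least two segments of weight in $\{k,\,n-k-3\}$, while the type (3) lines ($k$-edges) number at most $2k\cdot n$ or so by the $\leq k$-set bound of~\cite{AG86} (a $k$-edge corresponds to a $k$-set, and the number of exactly-$k$-sets is bounded by the number of $\le k$-sets). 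Meanwhile, by Observation~\ref{obs:circles} the total number of segments of weight $k$ or $n-k-3$ over all lines in $B$ equals $6(k+1)(n-k-2)$. Comparing, a contradiction arises as soon as
\[
6(k+1)(n-k-2) < 2\left(\binom{n}{2} - O(k\,n)\right).
\]

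The point of using~\cite{AG86} rather than~\cite{D98} is that the error term is now only $O(kn)$, i.e.\ genuinely linear in $n$ for $k$ a constant fraction of $n$, so we can push $k$ as large as $n/3$ up to lower-order terms. Indeed, ignoring the $O(kn)$ and $O(n)$ terms, the inequality $6(k+1)(n-k-2) < n^2$ is satisfied for $k$ slightly below $n/3$ (the left side at $k=n/3$ is $\approx \frac{6}{3}\cdot\frac{2}{3}n^2 = \frac{4}{3}n^2 > n^2$, so one must be a little more careful here — see next paragraph). So the cleaner route is not to go through the generic inequality above but to invoke Lemma~\ref{lem:unboundedregions} directly: it gives $c_{n-k-3}\le (k+1)(n-k-2)$, hence an exact (not asymptotic) count of how few segments of the ``heavy'' weight $n-k-3$ exist, and one balances this against the number of lines that must be covered. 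Concretely, choosing $k$ so that $n-k-3 = \lfloor\frac{2n-3}{3}\rfloor$, i.e.\ $k \approx \frac{n-3}{3}$, one wants to show that the lines forced to carry a segment of weight $\le n-k-3$ cannot all be accommodated.

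The cleanest version of the argument: suppose for contradiction that \emph{every} line $b_{pq}\in B$ contains a segment of weight $\geq n-k-2$, equivalently every line has a segment of weight exactly $n-k-2$ or reaches even higher (recall weights on a line form a $\pm1$ walk whose extreme values are the two $j$-edge weights $j$ and $n-j-2$, by Observations~\ref{obs:1} and~\ref{obs:2}). Using Observation~\ref{obs:circles} and the \emph{exact} value $c_{n-k-2}+c_{k-1} = 2k(n-k-1)$ from~(\ref{equ:circles}), together with Lemma~\ref{lem:unboundedregions}'s bound $c_{n-k-2}\le k(n-k-1)$ and the $\le k$-set bound controlling the $k$-edge lines, one counts: the number of lines that must contain a segment of weight exactly $n-k-2$ is at least $\binom{n}{2}$ minus the $O(kn)$ lines of type (3), but each such segment is shared by at most one line (it lives on a unique bisector), and there are at most $3c_{n-k-2}\le 3k(n-k-1)$ such segments in total. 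So the contradiction is
\[
3k(n-k-1) < \binom{n}{2} - O(kn),
\]
and with $k = \frac{n-3}{3}$ the left side is $\approx 3\cdot\frac{n}{3}\cdot\frac{2n}{3} = \frac{2}{3}n^2 < \frac{1}{2}n^2$? — no, $\frac23 > \frac12$, so again one needs the finer bookkeeping: each line of type (1)/(2) must be hit \emph{twice}, or one must instead count only ``one side'' of the weight walk. The correct accounting (as in Section~\ref{sec:adapt}) is that type (1) and type (2) lines each need \emph{two} heavy segments, giving the sharper threshold; the arithmetic then closes at $n-k-3 = \lfloor\frac{2n-3}{3}\rfloor$.

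\textbf{Main obstacle.} The conceptual steps are all routine adaptations of Section~\ref{sec:adapt}; the real work is the exact integer optimization — getting the floor function $\lfloor\frac{2n-3}{3}\rfloor$ exactly right rather than up to an additive constant. This requires carefully tracking the $+1$'s and $-2$'s in $6(k+1)(n-k-2)$ versus $2\binom{n}{2} = n(n-1)$, using the exact bound on $\le k$-sets ($\le kn$, with no hidden constant) from~\cite{AG86} so that the type-(3) lines contribute an amount that can be absorbed exactly, and handling the three residue classes of $n \bmod 3$ separately. The parity/divisibility case analysis at the boundary is where all the care goes; everything upstream is bookkeeping with Observations~\ref{obs:1}--\ref{obs:circles} and Lemma~\ref{lem:unboundedregions}.
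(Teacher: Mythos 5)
Your diagnosis of why the direct transplant of Section~\ref{sec:adapt} fails at $k\approx n/3$ is correct, and you correctly single out Lemma~\ref{lem:unboundedregions} and the $\le k$-set bound of~\cite{AG86} as the tools that must replace~\cite{D98}. But the accounting you finally settle on --- that ``type (1) and type (2) lines each need \emph{two} heavy segments'' --- is false, and this is precisely the step where the argument has to be asymmetric. A bisector $b_{pq}$ whose pair forms a $j$-edge with $j\le k$ has unbounded segments of weights $j\le k$ and $n-j-2\ge n-k-2$; the $\pm1$ walk between these two extremes can pass through the level $n-k-3$ exactly once (monotonically), so such a line is guaranteed only \emph{one} segment of weight $n-k-3$. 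A quick sanity check shows your version cannot be repaired: if every line carried two segments of weight $n-k-3$, you would need $3c_{n-k-3}\ge 2\binom{n}{2}$, yet $3c_{n-k-3}\le 3(k+1)(n-k-2)\le \tfrac34 n^2 < n(n-1)$ for \emph{every} $k$ and large $n$, so you would ``prove'' the conclusion for all $k$, which is absurd. You never actually write down a closing inequality that survives your own arithmetic checks.

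The paper's proof assumes every bisector contains a segment of weight $n-k-2$ and splits $B$ into the lines with $j\ge k+1$ and those with $j\le k$. For the former, both unbounded segments have weight at most $n-k-3$ while the assumed segment of weight $n-k-2$ lies strictly above that level, so the walk crosses level $n-k-3$ at least twice: two segments of weight $n-k-3$ each. For the latter only one such segment is guaranteed, but by~\cite{AG86} there are at most $(k+1)n$ of them, so they contribute little. Balancing against the total of at most $3(k+1)(n-k-2)$ segments of weight $n-k-3$ --- and here the one-sided bound $c_{n-k-3}\le (k+1)(n-k-2)$ of Lemma~\ref{lem:unboundedregions} is essential, since Equation~(\ref{equ:circles}) alone would only bound the two weight classes together by $6(k+1)(n-k-2)$ --- gives $3(k+1)(n-k-2)\ \ge\ (k+1)n+2\bigl(\binom{n}{2}-(k+1)n\bigr)$, which fails exactly for $k=\lfloor\frac{n-4}{3}\rfloor$, whence $n-k-3=\lfloor\frac{2n-3}{3}\rfloor$. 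Contrary to your closing paragraph, no case analysis on $n\bmod 3$ is required; the missing idea is the single-versus-double counting above, not the integer bookkeeping.
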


\begin{proof}
	The proof is by contradiction: Suppose that every line in $B$ contains a segment of weight $n-k-2$; we will see that the largest value of $k$ which gives a contradiction is $k= \lfloor{\frac{n-4}{3}}\rfloor$.\\
	Partition the lines in $B$ into two classes:  (1) those whose defining points $p,q \in S$ form a $j$-edge for $j\geq k+1$, and (2) those whose defining points $p,q \in S$ form a $j$-edge for $j \leq k$.\\ 
	%We first show that then the following holds for some $w \in \{k,n-k-3\}$: 

	{\bf{Claim}:} The total number of segments of weight $n-k-3$, among all lines in $B$, is at most $3(k+1)(n-k-2)$, and every line of type (1) contains two segments of weight $n-k-3$.\\ 
	
By Lemma~\ref{lem:unboundedregions}, $c_{n-k-3} \leq (k+1)(n-k-2)$ and by Observation~\ref{obs:circles}, the number of segments of weight $n-k-3$, among all lines in $B$, is at most $3(k+1)(n-k-2)$. By assumption, each line $b_{pq}$ in $B$ has a segment of weight $n-k-2$.  Let $b_{pq}$ be of type (1). By Observation~\ref{obs:2}, $b_{pq}$ has an unbounded segment of weight $n-j-2 \leq n-k-3$. When traversing $b_{pq}$ we go from an unbounded segment of weight $j\leq (n-2)/2$, via a segment of weight $n-k-3$ to a segment of weight $n-k-2$ and then to the other unbounded segment of weight at most $n-k-3.$ Hence, $b_{pq}$ contains two segments of weight $n-k-3$. This proves the claim.\\
	
	The number of lines in $B$ of type (2) is at most $(k+1) \cdot n$, because the number of $\leq k$-sets of $S$ is known to be at most $k \cdot n$~\cite{AG86} and the number of $k$-edges equals the number of $(k+1)$-sets~\cite{AG86,P85}.
	Since each line of type (2) contains a  segment of weight $n-k-3$ and each line of type (1) contains two segments of weight $n-k-3$, we get
	$$3(k+1)(n-k-2) \geq 1 (k+1)\cdot n + 2\left({{n}\choose{2}}-(k+1)\cdot n\right).$$ But this only holds if $k \geq   (n-3)/3$, and gives a contradiction for $k= \lfloor{\frac{n-4}{3}}\rfloor$. Therefore $B$ contains a line $b_{pq}$, all whose segments have weight at most  $n-k-3 = n-\lfloor{\frac{n-4}{3}}\rfloor-3= \lfloor\frac{2n-3}{3}\rfloor$. Then, every circle through $p$ and $q$ encloses at most  $\lfloor{\frac{2n-3}{3}}\rfloor$ points of $S$.
\end{proof}

\section{Two-colored point sets}\label{sec:twocolored}

%Let $R$ be a set of $n$ red points and $B$ be a set of $m$ blue points in the plane. We want to find a red point $p$ of $R$ and a blue point $q$ of $B$ such that any circle through $p$ and $q$ encloses many points of $R \cup B$.\\

\begin{theorem}
	Every set $S$ of $n$ red points and $m=\lfloor{c n\rfloor}$, for $c \in (0,1]$, blue points in general position in the plane contains a red point $p$ and a blue point $q$ such that any circle passing through them encloses at least $\frac{n + m - \sqrt{n^2 + m^2}}{2} - o(n+m)$ points of $S$. 
\end{theorem}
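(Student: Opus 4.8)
The plan is to adapt the counting argument from the proof of Theorem~\ref{thm:general}, but now restricting attention to the bichromatic perpendicular bisectors, i.e., the set $B_{rb}$ of the $nm$ lines $b_{pq}$ where $p$ is red and $q$ is blue. We want to show that for a suitable threshold $k$, at least one line $b_{pq}\in B_{rb}$ has all of its $n+m-1$ segments of weight at least $k$, which immediately yields the theorem. As before, we argue by contradiction: assume every line in $B_{rb}$ contains a segment of weight $k-1$ (we can restrict to this single condition rather than the symmetric pair, since for the lower bound we only care about small weights). We then classify the lines of $B_{rb}$ by the $j$-edge type of their defining pair, into those with $j\le k-1$ and those with $j\ge k$; by Observation~\ref{obs:3} and the weight-change argument of Observation~\ref{obs:1}, each line of the second class that contains a segment of weight $k-1$ in fact contains two segments of weight $k$ (traversing from an unbounded segment of weight $j\ge k$ down through $k-1$ and back up).

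The key quantitative inputs are two. First, the number of bichromatic lines whose defining pair forms a $j$-edge with $j\le k-1$ must be bounded: this is a bichromatic $\le(k-1)$-edge count, and the relevant bound is roughly that the number of bichromatic $j$-edges for $j$ up to $k$ is $O(k(n+m))$, or more precisely we need the sharper constant. Second, and this is where the $\sqrt{n^2+m^2}$ comes from, we need to count the total number of segments of bounded weight $k$ lying on bichromatic bisectors. A circle through three points of $S$ contributes a segment to $b_{ab}$ only when $a,b$ have different colors; a circle through two red and one blue point contributes one bichromatic segment (on the two red-blue bisectors... wait, on $b_{ab}$ and $b_{ac}$ if $a$ is the blue one — so two bichromatic segments), and a monochromatic-free triple contributes differently. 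So the number of weight-$k$ bichromatic segments is a weighted sum of $c_k^{(2,1)}$ and $c_k^{(1,2)}$-type quantities (circles with $2$ red $1$ blue, resp. $1$ red $2$ blue, among the $k$ enclosed — no, among the three on the circle), and one needs an analogue of Lee's identity~\eqref{equ:circles} for these colored circle counts. I would either derive such an identity directly, or obtain the needed upper bound on the bichromatic weight-$k$ segment total via the colored Lovász-type / $j$-facet machinery in $\mathbb{R}^3$ used by Ramos and Via\~na and by Prodromou.

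Assembling the inequality: let $L_1$ be the number of type-(1) lines (small $j$-edges, $j\le k-1$) and $L_2=nm-L_1-(\text{$k$-edge lines})$ the type-(2) count. Each type-(1) line needs at least one weight-$k$ bichromatic segment, each type-(2) line at least two, so the total bichromatic weight-$k$ segment count must be at least $L_1+2L_2 \ge nm + L_2 - O(\cdot)$, i.e.\ at least about $2nm - O(k(n+m))$. On the other hand this total is at most something like $2(k+1)(n-k-2)\cdot(\text{fraction that is bichromatic})$; the bichromatic fraction of circles, when the two enclosed-neutral... the point is that the governing maximum of $(k+1)(\text{complement})$ type terms, optimized with the colored constraint, peaks so that the contradiction survives exactly up to $k = \frac{n+m-\sqrt{n^2+m^2}}{2} - o(n+m)$. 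The main obstacle I expect is precisely establishing the colored analogue of \eqref{equ:circles} — equivalently, pinning down the exact leading constant in the count of bichromatic segments of a given bounded weight across all red-blue bisectors; the classification and traversal arguments are routine adaptations, but getting the right coefficient (so that the answer is $\sqrt{n^2+m^2}$ and not some other symmetric function of $n,m$) requires care, and I would most likely do it by lifting to the paraboloid in $\mathbb{R}^3$ and invoking known bounds on colored $j$-facets rather than by a purely planar computation.
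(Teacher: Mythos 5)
There is a genuine gap, in two places.

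First, your decision to ``restrict to the single condition'' --- assuming only that every bichromatic bisector contains a segment of weight $k-1$, and counting only weight-$k$ segments --- loses exactly the leverage that produces the claimed constant. The two-sided version is not a cosmetic symmetrization: the supply side of the counting argument rests on the identity~(\ref{equ:circles}), which controls only the \emph{sum} $c_k+c_{n+m-k-3}=2(k+1)(n+m-k-2)$. Individually, $c_k$ for small $k$ can be as large as almost twice $(k+1)(n+m-k-2)$ (already $c_0=2N-h-2$ versus $c_{N-3}=h-2$ for hull size $h$, $N=n+m$, shows the split can be maximally lopsided), so the best available upper bound on the number of weight-$k$ bichromatic segments alone is $2c_k\leq 4(k+1)(N-k-2)$. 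Your demand side is $L_1+2L_2\geq 2nm-L_1-o(nm)\geq 2nm-kN-o(nm)$, and solving $4(k+1)(N-k-2)<2nm-kN$ for $n=m$ gives only $k\approx\frac{5-\sqrt{17}}{4}n\approx 0.219\,n$, short of the claimed $n(1-\frac{1}{\sqrt 2})\approx 0.293\,n$. You must keep the hypothesis ``every line contains a segment of weight $k-1$ \emph{or} of weight $n+m-k-2$'' and count segments with weight in $\{k,\,n+m-k-3\}$, exactly as in the proof of Theorem~\ref{thm:general}, so that the exact identity applies to the supply.

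Second, the step you flag as ``the main obstacle'' --- a colored analogue of~(\ref{equ:circles}), which you propose to obtain via colored $j$-facets in $\mathbb{R}^3$ --- is both unexecuted in your write-up and unnecessary. You in fact stumble on the right observation mid-sentence and then abandon it: a circle through three points of $S$ lies on a bichromatic bisector $b_{ab}$ only for the (at most two) bichromatic pairs among its three defining points, so \emph{every} circle contributes at most two bichromatic segments, and the total number of bichromatic segments of weight $k$ or $n+m-k-3$ is at most $2\bigl(c_k+c_{n+m-k-3}\bigr)=4(k+1)(n+m-k-2)$ directly from the uncolored identity; no decomposition into $c_k^{(2,1)}$ and $c_k^{(1,2)}$ and no lifting argument is needed. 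With that bound in place and the two-sided hypothesis restored, the contradiction $4(k+1)(n+m-k-2)<2\bigl(nm-O((n+m)\sqrt[3]{k+1})\bigr)$ yields precisely $k=\frac{n+m-\sqrt{n^2+m^2}}{2}-o(n+m)$ as the root of $2k^2-2k(n+m)+nm=0$. As submitted, however, the proposal neither establishes its key quantitative input nor reaches the stated bound.
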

\begin{proof}
	The proof is very similar to the one of Theorem~\ref{thm:general}. The difference is that we now only consider bisectors $b_{pq}$ for points $p$ and $q$ of different color. Let $B$ be this set of $n m$ bichromatic bisectors. Each $b_{pq}$ in $B$ is cut into $n+m-1$ open segments, which are delimited by the center points of the circles passing through $p$, $q$, and one of the $n+m-2$ remaining points of $S.$ We then need a bound on the number of segments of weight $k$ plus the number of segments of weight $n+m-k-3$, among all bichromatic bisectors, analogous to Observation~\ref{obs:circles}. 
	\begin{observation}\label{obs:colorcircles}
		A circle passing through points $p_1, p_2, q \in S$, with $q$ of different color than $p_1$ and $p_2$, corresponds to two segments, one on $b_{p_1q}$  and one on $b_{p_2q}$.
		Hence, when summing over all $n m$ bichromatic bisectors in $B$, the number of segments of weight $k$ plus the number of segments of weight $n+m-k-3$ is at most $2c_k+2c_{n+m-k-3}=4(k+1)(n+m-k-2)$.
	\end{observation}
	Note that here we used Equation~(\ref{equ:circles}) which also counts circles passing through three points of the same color.
	Then, following the steps of the proof of Theorem~\ref{thm:general}, 
	we get that each line of type (1) and of type (2) in B contains at least
	two segments with weight from the set $\{k,n+m-k-3\}$; and the number of lines of type (3) is at most $O((n+m) \sqrt[3]{k+1})$.
	Then we get a contradiction if $$4(k+1)(n+m-k-2) < 2\left(n m- O((n+m) \sqrt[3]{k+1})\right),$$ because then there would not be enough segments to cover all the lines of type (1) and of type (2) in $B$ with two segments. %Note that we need two segments per line.
	Then, we can set $k = \frac{n + m - \sqrt{n^2 + m^2}}{2} - o(n+m)$. 
\end{proof}

Figure~\ref{fig:ejemplo} shows an upper bound construction, for $n$ red and $n$ blue points in convex position; for every pair of points, one red and the other blue, there is a circle through them which encloses at most $\lfloor{\frac{n}{2}}\rfloor$ points of $S$.

\begin{figure}[t]
	\begin{center}	
		\includegraphics[width=0.28\textwidth]{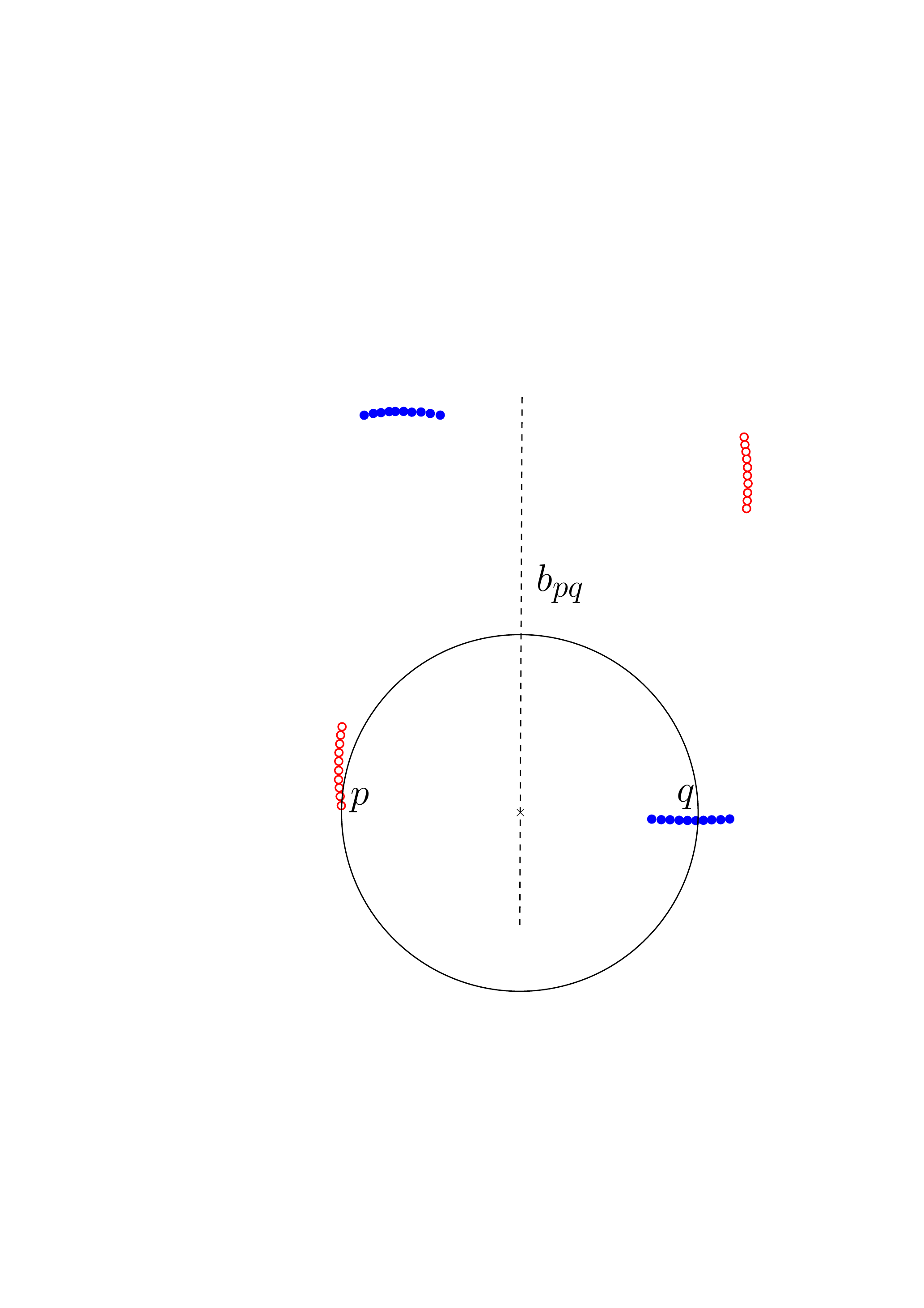}
		\caption{Convex configuration $S$ with $n$ red points and $n$ blue points, consisting of four groups of $\lfloor{\frac{n}{2}}\rfloor$ or $\lceil{\frac{n}{2}}\rceil$ points. 
			%Each group is placed very close to a corner of the unit square. 
		}
		\label{fig:ejemplo}
	\end{center}
\end{figure}

\section{Many segments of repeated weights}\label{sec:repeat}

%we study how many circles passing through two given points $p$ and $q$ (and a third point of $S$) enclose the same number of points of $S$. 
%We present point sets $S$ which contain two points $p$ and $q$ such that many circles through $p$ and $q$ (and another point of$S$) enclose the same number of points %of $S$. Equivalently, 

%We present sets of points for which we find many segments with repeated weights along a bisector of $B$. 

\begin{prop}\label{prop:recursive}
	There exists a set $S$ of $n$ points in general position in the plane which satisfies: %such that for at least $4n/7-O(1)$ different values of $k$, the order-k Voronoi diagram of $S$ satisfies: $\bullet$ There exist three bisectors of two sites that contributes with four edges to this diagram.
	Let $b_k$ be the number of bisectors among pairs of points of $S$ that contribute with four edges to the order-k Voronoi diagram of $S$. Then $\sum_{k=1}^{2n/7} b_k \geq \frac{4n}{7}-o(n)$ and $b_k \neq 0$ for at least $\frac{n}{6}-o(n)$ values of $k$. Further, only a subset of $O(\log(n))$ points is needed to define these bisectors. 
\end{prop}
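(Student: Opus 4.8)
The plan is to give an explicit recursive construction built out of small \emph{zigzag gadgets}. A single gadget consists of two points $p,q$ together with an auxiliary point set $A$, arranged so that the weight profile along $b_{pq}$ (for $S=\{p,q\}\cup A$) is a prescribed $\pm 1$ walk. The key point is that any such walk can be realized: one fixes in advance the positions $t_1<t_2<\cdots$ on the line $b_{pq}$ at which the auxiliary circumcentres should occur, and then, for each $t_i$, places a point $a_i$ so that the circle through $p,q$ centred at $t_i$ passes through it; by Observation~\ref{obs:1} the weight changes by exactly $\pm1$ at $t_i$, and the choice of which side of $b_{pq}$ to place $a_i$ on controls the sign, so every $\pm1$ walk is attainable after a final generic perturbation that restores general position. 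Taking the walk to climb from the unbounded weight to a height $h$, descend, climb to $h$ again and descend, every intermediate weight level is attained on exactly four segments of $b_{pq}$, so this one bisector contributes four edges to the order-$k$ Voronoi diagram for roughly $h$ consecutive values of $k$.

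I would then assemble the gadgets recursively. Place a first gadget $G_0$ of linear size, and, in a small region far from $b_{p_0q_0}$ (say well outside everything relevant to $G_0$), place a rescaled copy of a second gadget $G_1$; inside $G_1$ place a rescaled $G_2$; and so on for $\Theta(\log n)$ levels, giving the $i$-th gadget a fixed fraction of the points not yet used. Since $G_j$ with $j\neq i$ is tiny and far away as seen from the line $b_{p_iq_i}$, all of its auxiliary circumcentres on that line cluster near one of its unbounded ends, so the foreign gadgets contribute only a constant ``pedestal'' to the weight profile of $b_{p_iq_i}$ along its whole central stretch and leave the tent shape, hence the four-edge levels, intact --- merely shifting them to a range of order-$k$ values governed by the sizes of the more deeply nested gadgets. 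Only the $\Theta(\log n)$ points $p_0,q_0,p_1,q_1,\dots$ are then needed to name all the bisectors involved. Because a single bisector can contribute four edges to at most $(n-1)/4$ order-$k$ diagrams, reaching $\tfrac{4}{7}n$ forces the good ranges of several bisectors to overlap; I would arrange this by letting a few nearly parallel bisectors through a common hub share the \emph{same} auxiliary set, so that one auxiliary set produces four-edge levels on more than one line.

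The final step is bookkeeping: adding up, over the $\Theta(\log n)$ levels and the few bisectors sharing each auxiliary set, the lengths of the intervals of $k$ on which four edges are contributed, one obtains a convergent geometric sum whose value, after choosing the fractions assigned to successive levels and the pedestal heights to optimize it, is $\sum_{k=1}^{2n/7} b_k \geq \tfrac{4}{7}n - o(n)$, while the union of these intervals --- staggered inside $[1,2n/7]$ --- has length at least $\tfrac{n}{6}-o(n)$, which gives the second assertion. The routine ingredients are the interpolation step realizing a prescribed weight walk and the estimate that a far, small cluster shifts a weight profile only by an additive constant. The hard part is ensuring that all the independence conditions hold at once: that no enter/leave event of a foreign auxiliary point lands in the middle of a tent of $b_{p_iq_i}$ (so that the foreign contributions really are monotone over the entire range the tent sweeps), that all $\binom{\Theta(\log n)}{2}$ relevant bisectors are simultaneously in general position, and that the nested scales can be picked so that all of this coexists while the geometric sum still closes at exactly the stated constants. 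Balancing these requirements is where the real work lies.
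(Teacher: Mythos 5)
Your framework---tent-shaped ($\pm1$) weight walks realized by placing auxiliary points at prescribed cocircularity positions, nested at $\Theta(\log n)$ scales so that far-away sub-gadgets only add a locally constant pedestal to each bisector's profile---is the same underlying mechanism as the paper's proof, and your observation that one bisector contributes four edges to at most $(n-1)/4$ order-$k$ diagrams (so that several bisectors with overlapping good ranges are forced) is correct and important. One small slip: the sign of the weight change at $t_i$ is governed by which side of the \emph{line through $p$ and $q$} the point $a_i$ lies on, not which side of $b_{pq}$; as the center moves along $b_{pq}$ in a fixed direction, points on one side of $\overline{pq}$ enter the disk and points on the other side leave it.

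The genuine gap is that the construction is never actually pinned down at the step where the constants come from. Everything that makes the statement quantitative---how many bisectors share each auxiliary set, what fraction of points each level of the recursion receives, how high and how deep each tent is, and why the resulting geometric series sums to exactly $\frac{4n}{7}-o(n)$ while the union of the good $k$-intervals has length $\frac{n}{6}-o(n)$---is deferred to ``bookkeeping'' and ``optimization,'' and your proposal itself concedes that verifying the simultaneous realizability of several near-coincident M-shaped profiles is ``where the real work lies.'' That work is precisely the content of the proof. The paper resolves it with one concrete symmetric configuration: an equilateral triangle $p,q,r$ whose three supporting lines cut the plane into seven regions, each holding $n/7$ points, with the circumcircle of $pqr$ meeting only the central region; by the threefold symmetry, each of $b_{pq}$, $b_{pr}$, $b_{qr}$ automatically acquires a W-shaped weight list $[3n/7,\ldots,n/7,\ldots,2n/7-3,\ldots,n/7-3,\ldots,4n/7-2]$ with about $n/7$ levels repeated four times, the three bisectors of $\overline{pp'},\overline{qq'},\overline{rr'}$ contribute about $n/7^2$ more each, and recursing inside the central region yields $3(n/7+2n/7^2+2n/7^3+\cdots)=4n/7-o(n)$ and $n/7+n/7^2+\cdots=n/6-o(n)$ by explicit geometric series. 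Without an equally explicit arrangement (or at least a verified count for your ``nearly parallel bisectors through a common hub''), the two numerical conclusions of the proposition are asserted rather than proved.
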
 

\begin{proof}
	The set of points $S$ is obtained recursively in the following way.
	Let $p$, $q$, $r$ be the vertices of an equilateral triangle and consider the supporting lines through them dividing the plane into seven regions, see Figure~\ref{fig:recursive}.
	In each of these regions there is a group of $n/7$ points such that the circle by $p$, $q$ and $r$ only contains points of the central region. We assign the points $p,q$ and $r$ to the central region. This is the initial configuration.
	
	\begin{figure}[hbt]
		\begin{center}	
			\includegraphics[width=0.65\textwidth]{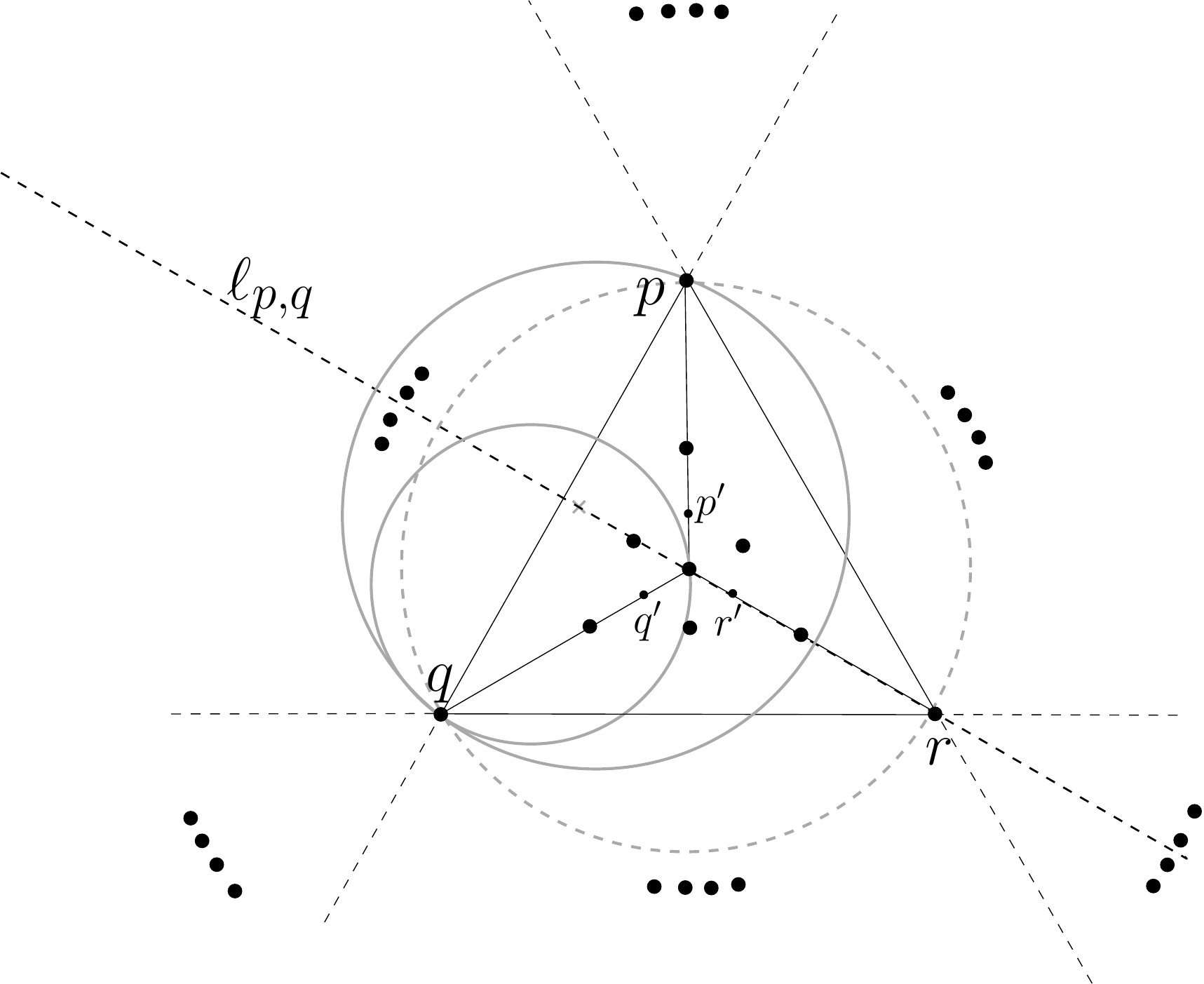}
			\caption{A recursive construction with many repeated weights.}
			\label{fig:recursive}
		\end{center}
	\end{figure}
	
	The $(n/7)-3$ points from the interior of the triangle $pqr$  give rise to seven other groups of $(n-3\cdot 7)/7^2$ points, arranged all of them in the same way as the initial configuration, that is, its central region has again three points $p'$, $q'$ and $r'$, which delimit another central region and so on. Note that  $p'$, $q',$ and $r'$ are placed very close to the center of the triangle $pqr$. Further, we can arrange the points symmetrically with respect to the lines through $\overline{pp'}$, $\overline{qq'}$ and $\overline{rr'}$; then we can move the points slightly to guarantee general position.
	
	First, let's see how many weights are repeated in any of the bisectors $b_{pq}$, $b_{pr}$ or $b_{qr}$.
	If we go through any of them, we obtain the list of weights: $[3n/7,\stackrel{\downarrow}{\ldots}, n/7, \stackrel{\uparrow}{\ldots} (2n/7)-3,\stackrel{\downarrow}{\ldots}, (n/7)-3, \stackrel{\uparrow}{\ldots}, (4n/7)-2]$, where consecutive values differ by one. Therefore, all values between $(2n/7)-3$ and $n/7$ are repeated four times. This already shows that there exists a bisector which contributes with four edges to $(n/7)-3$ different higher order Voronoi diagrams.
	Note that there is an interval of $b_ {pq}$ in which the circles through $p$ and $q$, and center in this interval, do not contain points outside of the central region defined by $p,q$ and $r$. 
	The same happens in the different central regions obtained recursively.
	Hence, considering this interval of the bisector $b_{p'q'}$ (analogously for $b_{p'r'}$,  $b_{q'r'}$), the list of weights is: $[\cdots, 3((n/7)-3)/7,\stackrel{\downarrow}{\ldots}, ((n/7)-3)/7, \stackrel{\uparrow}{\ldots} (2((n/7)-3)/7)-3,\stackrel{\downarrow}{\ldots}, ((n/7)-3)/7, \stackrel{\uparrow}{\ldots}, 4((n/7)-3)/7,\cdots]$, where $(((n/7)-3)/7)-3$ of them are repeated four times. To simplify the calculations, we can count approximately $(n/7^2)-3$ weights repeated four times, for each of  $b_{p'q'}$, $b_{p'r'}$, and  $b_{q'r'}$. Recursively, we obtain:
	$3(n/7+n/7^2+\cdots +n/7^{k-1}-3(k-1))$
	%$3((n/7)+n/7^2+\cdots +n/7^{k-1}-3(k-1)-(1/2)(1-1/7^{k-1})$
	%$=(1/4)(-2n+1)7^{1-k}+21\cdot 7^{-k}\,(k/2)+(n/2)-7/4$
	weights repeated four times.
	%O(n/2)
	
	Second, we consider any of the three bisectors of the segments $\overline{pp'}$, $\overline{qq'}$, $\overline{rr'}$. Their list of weights is $[n/2,\stackrel{\downarrow}{\ldots}, 2((n/7)-3))/7, \stackrel{\uparrow}{\ldots} 3((n/7)-3))/7,\stackrel{\downarrow}{\ldots},2((n/7)-3))/7, \stackrel{\uparrow}{\ldots}, n/2]$, where $((n/7)-3)/7$ (we use approximately $n/7^2$) values are repeated four times. 
	Considering the corresponding three halving lines in each step of the recursion we obtain: $3(n/7^2+\cdots +n/7^{k-1})$.
	%$=-(1/2)7^{1-k}n+21\cdot 7^{-k}(k/2)+(1/4)7^{1-k}+(n/14)-13/28.$
	%(n-7^2)/14
	
	Adding up,
	we obtain $\sum_{k=1}^{2n/7} b_k \geq 3(n/7+2n/7^2+\cdots +2n/7^{k-1}-3(k-1))=4n/7-(7n/7^k)-9$, which is approximately $4n/7$. Note that we can take $k\in O(\log(n))$. From the construction we get $b_k \neq 0$ for at least $n/7+n/7^2+\cdots +n/7^{k-1} - o(n) = n/6 -o(n)$ 
	values of $k$. 
\end{proof}

For sets of $2n$ cocircular points, the segments on the bisector of any $(n-1)$-edge (a halving line) have weight $n-1$. Another, not so elementary, construction without four co-circular points is given in Proposition~\ref{prop:halvinggeneral}, see Figure~\ref{fig:halvings3}.

%\begin{prop}
%	There exists a set $S$ of 2$n$ points in the plane such that every pair of points $p$, $q$ of $S$ which defines a halving line satisfies: Every circle passing through %$p$ and $q$ encloses $n-2$ or $n-4$ points. Equivalently: Every segment of $b_{pq}$ has weight $n-2$.  
%\end{prop}

%\begin{figure}[hbt]\label{fig:halvings}
%	\begin{center}	
%		\includegraphics[width=0.25\textwidth]{halvings.pdf}
%		\caption{All segments on any bisector of an $\frac{n-2}{2}$-edge have the same weight $\frac{n-2}{2}$.}
%		\label{fig:halvings}
%	\end{center}
%\end{figure}

\begin{prop}\label{prop:halvinggeneral}
	There exists a set $S$ of $2n$ points in general position in the plane such that every pair of points $p$, $q$ of $S$ which defines a halving line satisfies: Every circle passing through $p$ and $q$ encloses $n-2, n-1$ or $n$ points of $S$. %Equivalently: Every other segment of $b_{pq}$ has weight $n-2$.  
\end{prop}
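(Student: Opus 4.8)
The plan is to exhibit $S$ as a convex point set obtained from a symmetric, nearly circular configuration by a carefully chosen radial perturbation, and to reduce the claim to a combinatorial condition along each halving bisector. Recall the set-up of Section~\ref{sec:adapt}. Fix a halving pair $p,q$, so the line $pq$ has $n-1$ points of $S$ on each side. By Observation~\ref{obs:2} (applied with total point count $2n$), the two unbounded segments of $b_{pq}$ both have weight $2n-(n-1)-2=n-1$, and consecutive segments differ by $\pm1$. Put the midpoint of $pq$ at the origin with $pq$ on the $x$-axis and set $a=|pq|/2$; a circle through $p$ and $q$ has center $(0,t)\in b_{pq}$, it passes through $x=(x_1,x_2)\in S$ exactly when $t=\tau(x):=(|x|^2-a^2)/(2x_2)$, and an easy computation shows $x$ lies strictly inside that circle iff ($x_2>0$ and $t>\tau(x)$) or ($x_2<0$ and $t<\tau(x)$). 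Writing $\mathrm{sgn}(x)=+1$ if $x$ lies above the line $pq$ and $-1$ if below, the weight of the segment of $b_{pq}$ containing $(0,t)$ equals $(n-1)+\sum_{x:\,\tau(x)<t}\mathrm{sgn}(x)$. Since there are $n-1$ points of $S\setminus\{p,q\}$ on each side of $pq$, it suffices to build $S$ so that, along every halving bisector, the $2n-2$ values $\tau(x)$ are ordered so that all partial sums $\sum_{\tau(x)<t}\mathrm{sgn}(x)$ lie in $\{-1,0,1\}$; this is precisely the statement that every circle through $p$ and $q$ encloses $n-2$, $n-1$ or $n$ points of $S$.

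For the construction, take $2n$ points that are centrally symmetric about a point $O$ and lie very close to a circle centered at $O$, at angular positions near those of a regular $2n$-gon, arranged so that $S$ is in convex position; then the only halving pairs are the $n$ antipodal pairs $\{v_k,v_{k+n}\}$, and by the dihedral symmetry it is enough to treat one of them. For an antipodal pair the midpoint of $pq$ is $O$, so central symmetry of $S$ gives $\tau(-x)=-\tau(x)$ and $\mathrm{sgn}(-x)=-\mathrm{sgn}(x)$: the $2n-2$ events pair up as $\{\tau(x),-\tau(x)\}$ with opposite signs, and one computes $\tau(v_j)\approx \delta_j/\sin\!\big((j-k)\pi/n\big)$, where $\delta_j$ is the small, centrally symmetric, signed radial displacement of $v_j$. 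The displacements $\delta_j$ are then chosen so that, for every $k$, sorting the same-side points by $|\delta_j|/\sin((j-k)\pi/n)$ flips the sign $\mathrm{sgn}(\delta_j)$ at each step; this makes the merged list of $\tau$-values alternate between the two sides of each halving line, which forces the partial sums into $\{-1,0,1\}$. Finally, a generic infinitesimal perturbation removes any collinear triple or cocircular quadruple; since the only properties used are strict inequalities among the finitely many $\tau(x)$ (and the fact that the $n$ antipodal chords are the only halving ones), they survive the perturbation. This yields the configuration of Figure~\ref{fig:halvings3}.

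The hard part is pinning down the radial pattern, because it must produce the required interleaving for all $n$ halving bisectors simultaneously, and the obvious candidates fail. If the points alternate between two concentric circles, then along any halving bisector all $\Theta(n)$ events coming from one circle precede all those from the other, so the partial sums reach $\pm\Theta(n)$ and the weight is nowhere near balanced. If instead one spreads the magnitudes $|\delta_j|$ over wildly separated scales, the ordering of events degenerates to the ordering of the magnitudes; removing the residue corresponding to $p,q$ then amounts to deleting one entry from a fixed $\pm1$-pattern, and one checks that for $n\ge 3$ no $\pm1$-pattern has the property that every single deletion keeps its partial sums in $\{-1,0,1\}$. Hence the pattern must genuinely depend on the rotation: the signs and magnitudes of the $\delta_j$ have to be tuned against the factors $\sin((j-k)\pi/n)$ so that the sign-flip condition above holds for each of the $n$ antipodal pairs at once, and then verified that the small final perturbation does not disturb any of these $O(n)$ strict comparisons. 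Exhibiting such a pattern explicitly and carrying out this verification is where essentially all of the work lies.
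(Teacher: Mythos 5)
Your first paragraph is a correct and clean reduction: for a halving pair of a $2n$-point set the two unbounded segments of $b_{pq}$ indeed have weight $n-1$, the weight at parameter $t$ is $(n-1)+\sum_{\tau(x)<t}\mathrm{sgn}(x)$, and the conclusion is equivalent to all partial sums lying in $\{-1,0,1\}$, i.e.\ to the events along $b_{pq}$ alternating between the two sides of the line $pq$. The genuine gap is that you never produce a point set realizing this condition. The entire burden of the proposition is existence, and your construction reduces it to ``choose radial displacements $\delta_j$ of a near-regular $2n$-gon so that, for every one of the $n$ antipodal bisectors simultaneously, sorting the points by $\delta_j/\sin((j-k)\pi/n)$ alternates sides'' --- and then you state explicitly that exhibiting such a pattern and verifying it ``is where essentially all of the work lies.'' Worse, your own third paragraph argues that the natural candidates (two concentric circles, widely separated magnitude scales) fail, so it is not even clear that a valid pattern exists within your centrally symmetric convex framework; nothing you write rules out that the $n$ simultaneous alternation constraints are mutually inconsistent. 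A small additional slip: the alternation that must hold is in $\mathrm{sgn}(x)$ (which side of the line $pq$ the point lies on, i.e.\ the sign of $\sin((j-k)\pi/n)$), whereas you phrase the condition as a sign flip of $\mathrm{sgn}(\delta_j)$; these are not the same and the conflation obscures what actually has to be arranged.

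For comparison, the paper does not work with a perturbed regular polygon at all. It builds $S$ explicitly and iteratively: the pairs $p_i,q_i$ are placed on lines $\ell_i$ through a common midpoint $m$, obtained by rotating the line $p_1q_1$ in steps of $\pi/(2n)$, and each consecutive quadruple $q_{i+1},p_{i+1},q_i,p_i$ is made cocircular (then perturbed to general position). The cocircularity forces the nesting property that a circle growing along any bisector $b_{p_iq_i}$ encounters the pairs $\{p_j,q_j\}$ in order, one point from each side at a time, which is exactly the alternation your reduction asks for --- but obtained constructively rather than postulated. If you want to salvage your approach, you must either exhibit the displacement pattern concretely and check all $n$ bisectors, or switch to a construction (like the paper's) in which the interleaving is built in by design.
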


\begin{proof}
	The construction is as follows. First, we place two points, $p_1$ and $q_1$; let $m$ be their midpoint. We consider the lines $\ell_1,\cdots, \ell_{n}$, where $\ell_1$ is the supporting line of $\overline{p_1q_1}$, $\ell_{n}$ is $b_{p_1q_1}$, and the remaining ones are obtained from $\ell_1$ after successive rotations of angle $\pi/(2n)$ and center $m$. In the following we define the points $q_i$, $p_i$ for $i=n,\cdots,3,2 $ such that all $p_i$ are above $\ell_1$ and all $q_i$ are below $\ell_1$. Both $q_i$ and $p_i$ will lie on $\ell_i$. 
	Now we place the points $p_{n}$ and $q_{n}$ on $b_{p_1q_1}$, in such a way $p_{n}$ is close to $m$, and $q_{n}$ cocircular with $p_1$, $q_1$ and $p_n$.
	%We denote $C_{i,i+1}$ the circle through $q_i,p_i,q_{i+1}$ and $p_{i+1}$.
	
		\begin{figure}[t!]
			\begin{center}	
				\includegraphics[width=0.65\textwidth]{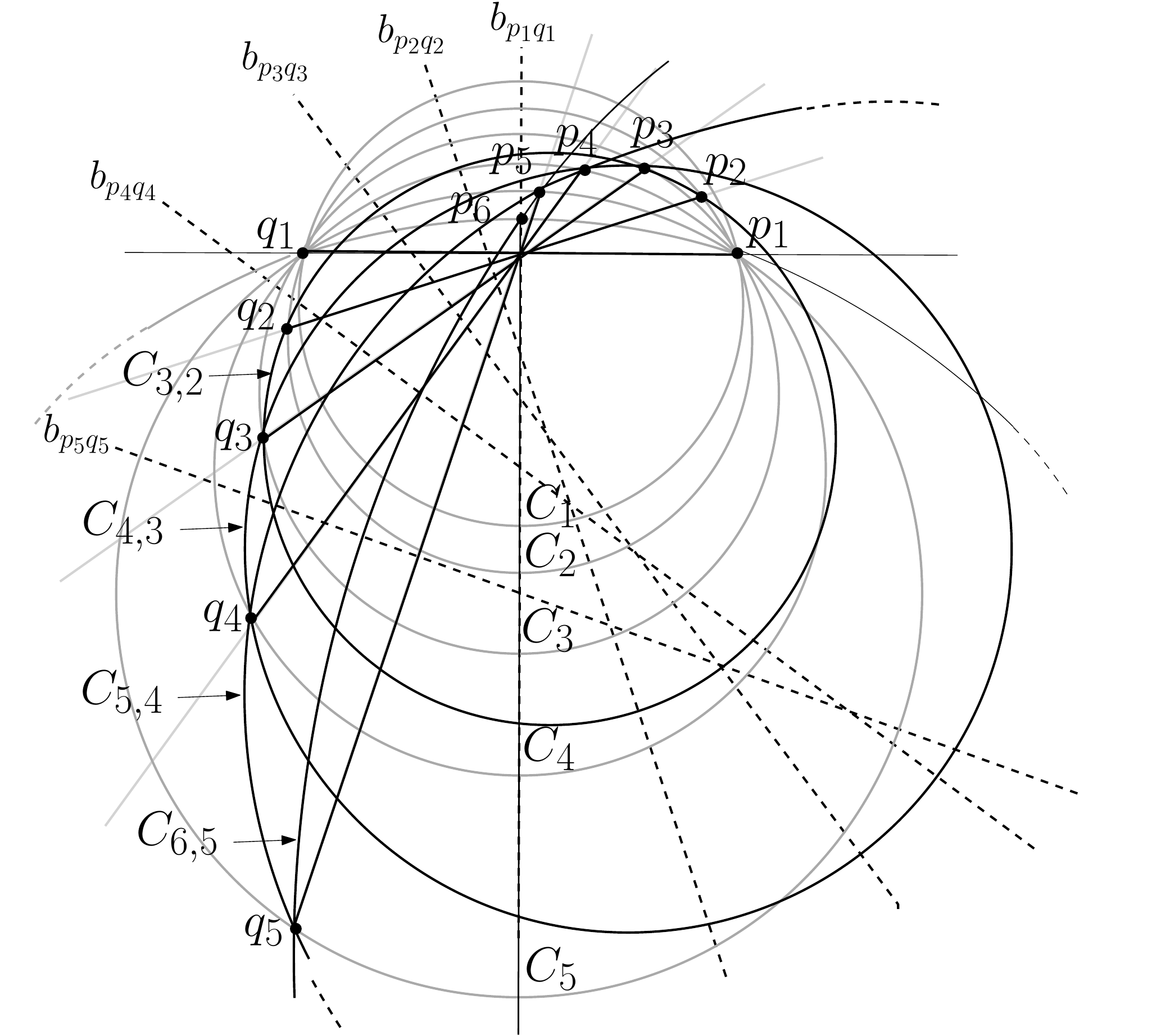}
				\caption{A set of $2n$ points. All $2n-1$ segments on the bisector of any $(n-1)$-edge have weight $n-2$,  $n-1$, or $n$.}
				\label{fig:halvings3}
			\end{center}
		\end{figure}
	
	Let  $2\leq k\leq n$. 
	From $C_{p_1q_1}$, we denote $C_k$ the circle through $p_1$, $q_1$ and the point $p_k$ on the bisector $b_{p_1q_1}$. Hence, $q_n$ is on $C_n$.  
	Note that these circles are well defined, once $p_k$ is defined. 
	
	Next, we place the point $q_{n-1}$ on line $\ell_{n-1}$, in the interior of $C_n$.
	We draw a circle through the points $q_n$, $p_n$ and $q_{n-1}$. This circle cuts $\ell_{n-1}$ in the point $q_{n-1}$ and in another one, where we put $p_{n-1}$.
	We denote $C_{i+1,i}$ the circle through  $q_{i+1},p_{i+1},q_i$ and $p_i$ for $i=n-1,\cdots,3,2$.  
	
	In a general step of this construction, we
	place the point $q_{n-i}$ on the line $\ell_{n-i}$, in the interior of $C_{n-i+1}$ and outside the circle $C_{n-i+2,n-i+1}$. Then we repeat the process to obtain the point $p_{n-i}$. See Figure~\ref{fig:halvings3}.
	
	Note that the points $q_{i+1},p_{i+1},q_i$ and $p_i$ are cocircular, they belong to $C_{i+1,i}$, by construction.
	Furthermore, $p_i$ and $q_i$ are the intersection points of $C_{i+1,i}$ with $C_{i,i-1}$. These points $p_i$ and $q_i$ define in  $C_{i,i-1}$ two arcs, the one containing $q_{i-1}$ is exterior to   $C_{i+1,i}$, the other one is in the interior of $C_{i+1,i}$.
	Therefore, if we go through any of the bisectors $b_{p_iq_i}$, the corresponding circle is reaching the pairs $p_i$ and $q_i$ in order. That is, if the circle passes through $q_{i-1}$ it can not reach $q_j$ (or $p_j$), where $j>i+1$, before $q_{i+1}$ (or $p_{i+1}$) is reached. 
	
	Note that we can move the points $q_i$ slightly to achieve general position, while keeping the described properties. 
	
	It is easy to check that the line through $\overline{p_iq_i}$ is a halving line and if we go through any of the bisectors $b_{p_iq_i}$, the corresponding circle  contains $n-2$, $n-1$ or $n$ points.
\end{proof}

%\begin{acknowledgements}
%If you'd like to thank anyone, place your comments here
%and remove the percent signs.
%\end{acknowledgements}

% Authors must disclose all relationships or interests that 
% could have direct or potential influence or impart bias on 
% the work: 
%
% \section*{Conflict of interest}
%
% The authors declare that they have no conflict of interest.

% BibTeX users please use one of
%\bibliographystyle{spbasic}      % basic style, author-year citations
%\bibliographystyle{spmpsci}      % mathematics and physical sciences
%\bibliographystyle{spphys}       % APS-like style for physics
%\bibliography{}   % name your BibTeX data base

% Non-BibTeX users please use

\end{document}